\newtheorem{theorem}{Theorem}
\newtheorem{prob}{Problem}
\begin{document}
%
\title{{Multiuser Scheduling for Minimizing Age of Information in Uplink MIMO Systems}
\thanks{The work of H. Chen is supported by the CUHK direct grant under the project code 4055126. The work of Qian Wang is done when she is a visiting student at The Chinese University of Hong Kong.}
}
\author{\IEEEauthorblockN{He Chen\textsuperscript{1}, Qian Wang\textsuperscript{1,\,2}, Zheng Dong\textsuperscript{3}, and Ning Zhang\textsuperscript{4}
}%
\IEEEauthorblockA{\textsuperscript{1}Department of Information Engineering, The Chinese University of Hong Kong, Hong Kong SAR, China}
\IEEEauthorblockA{\textsuperscript{2}School of Electrical and Information Engineering, The University of Sydney, Sydney, Australia}
\IEEEauthorblockA{\textsuperscript{3}School of Information Science and Engineering, Shandong University, China}
\IEEEauthorblockA{\textsuperscript{4}Department of Computing Sciences, Texas A\&M University-Corpus Christi, TX, USA.\\
Emails: \textsuperscript{1}he.chen@ie.cuhk.edu.hk, \textsuperscript{2}qian.wang2@sydney.edu.au, \textsuperscript{3}zhengdong@sdu.edu.cn, \textsuperscript{4}ning.zhang@tamucc.edu}
}

\maketitle

\begin{abstract}
	This paper studies the user scheduling problem in a multiuser multiple-input multi-output (MIMO) status update system, in which multiple single-antenna devices aim to send their latest statuses to a multiple-antenna information-fusion access point (AP) via a shared wireless channel. The information freshness in the considered system  is quantified by a recently proposed metric, termed age of information (AoI). Thanks to the extra spatial degrees-of-freedom brought about by the multiple antennas at the AP, multiple devices can be granted to transmit simultaneously in each time slot. We aim to seek the optimal scheduling policy that can minimize the network-wide AoI by optimally deciding which device or group of devices to be scheduled for transmission in each slot given the instantaneous AoI values of all devices at the beginning of the slot. To that end, we formulate the multiuser scheduling problem as a Markov decision process (MDP). We attain the optimal policy by resolving the formulated MDP problem and develop a low-complexity sub-optimal policy. Simulation results show that the proposed optimal and sub-optimal policies significantly outperform the state-of-the-art benchmark schemes.
\end{abstract}



\IEEEpeerreviewmaketitle

\section{Introduction}
The concept of age of information (AoI) has recently attracted tremendous attention thanks to its capability of quantifying the information freshness in various applications involving status updates  \cite{Kaul2011mini,kaul2012real,kosta2017age,sun2019age,gu2019timely,gu2019minimizing,wang2019minimizing2,noma,li2020age,chen2020age}. The AoI is defined as the time elapsed since the generation time of the last successfully received status update packet at the destination. Different from the conventional packet-based performance metrics such as delay and throughput, AoI is a new metric that can constantly capture the information staleness from the perspective of the destination.

Early work on AoI resorted to the queueing theory for analyzing the average AoI performance of point-to-point systems with different status update generation models (e.g., generate-at-will and stochastic arrival) and queueing disciplines (e.g., first-come first-serve and last-come first-serve), see \cite{kosta2017age} for a comprehensive survey. Recent efforts on AoI have been shifted to minimize the network-wide AoI of the more practical multi-source networks. Along this research line, those studies that focused on the link scheduling problem are relevant to this work\cite{Pappas2015icc,yates2017status,Joo2017wiopt,Hsu2017isit,Tripathi2017globecom,Sun2018info,Lu2018mobihoc,Ceran2018pimrc,kadota2018scheduling,kadota2018optimizing,kadota2019minimizing,maatouk2020optimality}. In these work, the number of users that can be scheduled to transmit in each time slot is strictly limited by the number of physical orthogonal channels available in the system. For example, if all users share a common channel, at most one of them can be scheduled to transmit in each time~slot to avoid any collision.

In wireless communications, deploying multiple antennas at the transmitter and/or receiver, which is generally referred to as the multiple-input multiple-output (MIMO) technology, has been widely adopted in practical systems like LTE and Wi-Fi to boost the number of users that can be served on each physical frequency channel. This can be realized by leveraging the extra spatial degrees-of-freedom brought about by the multiple antennas. In principle, the maximum number of single-antenna users that can be scheduled to transmit over each physical channel are allowed to be equal to the number of antennas equipped at the receiver side, where various advanced signal processing algorithms can be applied to separate the information sent by multiple users concurrently \cite{spencer2004introduction}. However, an inherent tradeoff exists in multiuser MIMO systems: scheduling more users to transmit in the same time slot will lead to a higher transmission error probability for each scheduled user. The rationale is that each degree-of-freedom associated with the multiple antennas can only be used to either support one more user or boost transmission reliability of those users that have been scheduled to transmit. To our best knowledge, the MIMO technology has not been used to reduce the AoI in multiuser networks.

To fill the gap, in this paper we investigate a multiuser MIMO status update system, in which multiple single-antenna devices want to send their latest statuses to a multiple-antenna access point (AP) via a common wireless uplink channel. Thanks to the multiple antennas equipped at the AP, multiple devices can be scheduled to transmit in each time slot. A fundamental question that arises here is ``\emph{what is the optimal scheduling policy for minimizing the long-term network-wide AoI of the considered system}?". To answer this question, we formulate the multiuser scheduling in the considered system as a Markov decision process (MDP) problem. We obtain the optimal scheduling policy by resolving the formulated MDP. An action elimination procedure is executed to reduce the computational complexity. Furthermore, a sub-optimal policy with much lower complexity is also devised. Simulation results are provided to demonstrate the performance superiority of the developed optimal and sub-optimal policies over the benchmarking schemes that always schedule a fixed number of devices. To our best knowledge, this paper serves as the first attempt to apply the multi-antenna technology for reducing the AoI of multiuser networks.

\section{System Model and Problem Formulation}

\begin{figure}
\centering \scalebox{0.35}{\includegraphics{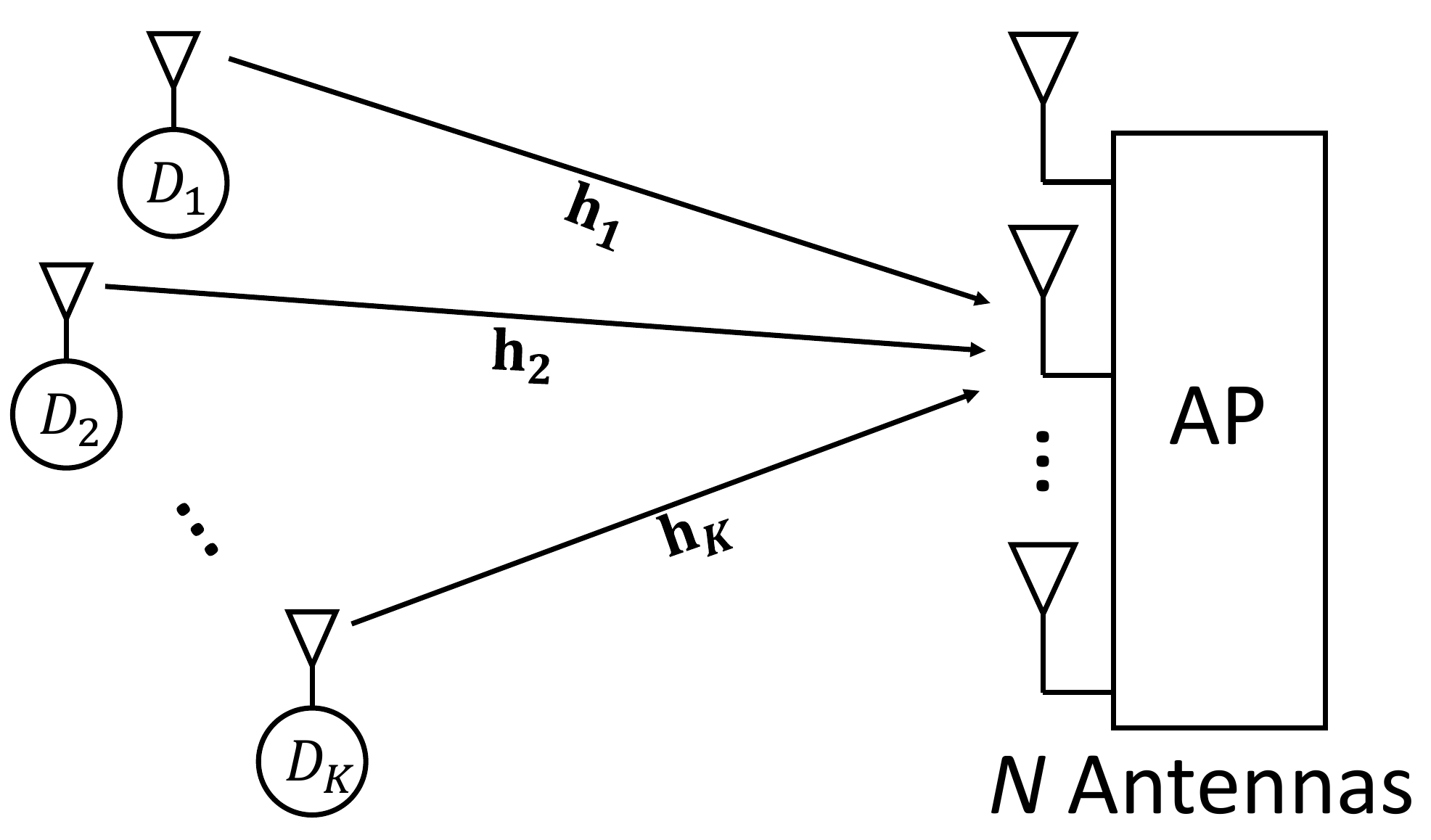}}
\caption{The considered multiuser MIMO status update system with one AP and $N$ end-devices. }\label{fig:systemmodel}
\end{figure}
Consider a multiuser MIMO status update system consisting of an information-fusion AP equipped with $N$ antennas and $K$ end-devices, as depicted in Fig. \ref{fig:systemmodel}. Denote by $D_1,\cdots,D_K$ the $K$ devices, which aim to report their latest statuses to the AP as timely as possible via a shared channel in the uplink. All devices are assumed to be equipped with a single antenna. In this paper, we enforce $N \ge K$ to ensure that the AP can schedule all the devices to access the channel simultaneously if needed. The overloaded case with $N < K$ has been left as a future work. Time is divided into slots of equal durations and the transmission of each status update packet takes exactly one time slot. The timeliness and freshness of the status updates from various devices at the AP is quantified by the recently proposed AoI metric, first coined in~\cite{Kaul2011mini}.

As in \cite{yates2017status,kadota2018scheduling,kadota2018optimizing,kadota2019minimizing}, we consider that the devices are scheduled in a centralized manner and the AP serves as the system coordinator. Specifically, at the beginning of each time slot, the AP will need to decide which device or group of devices will be scheduled to report their latest status(es) in the current time slot. If device $D_i$ is scheduled to transmit in one time slot, it first samples the fresh information and generates a status update packet at the beginning of the time slot, which is known as the ``generate-at-will" model in the literature. Node $D_i$ then sends the generated status update to the AP. Otherwise, $D_i$ will stay idle during the time slot. The principle of the proposed multiuser scheduling algorithm will be elaborated later in this section. Note that in the ``generate-at-will" model, the old status will be replaced by the newly generated one at the beginning of each time slot. As such, in principle each node only needs to maintain a one-packet buffer.

We define\footnote{It is worth pointing out that the multiuser scheduling framework developed in this paper is also applicable for other channel models. We here consider the Rayleigh fading for the sake of the error probability calculation later.} ${\bf{h}}_{i}\sim \mathcal{CN}\left( {{{\bf 0}_N},{\Omega _{i}}{{\bf I}_N}} \right)$ with $i=1,\cdots,K$ to represent the $N\times 1$ channel vector from the device $D_i$ to the AP. We model $\Omega _{i}= d_i^{-\tau}$ with $d_i$ and $\tau$ denoting the distance from the device $D_i$ to the AP and the path loss factor, respectively. Furthermore, we assume quasi-static fading channels for which the fading coefficients remain constant during each transmission slot and change independently from one to another. We assume that all devices are pre-assigned with a unique channel training pilot such that the AP can still estimate the accurate channel states of multiple devices even when they transmit at the same time. As such, the AP can apply various signal processing algorithms to recover the status update packet(s) from the received composite signal. In this paper, we consider that the AP implements the zero-forcing algorithm \cite{Chen2007jsac} to separate the signals sent by multiple devices, thanks to its implementation simplicity.

To elaborate the signal processing at the AP, we now consider the special case that all devices are scheduled to transmit in a certain time slot. Denote by $x_i$ the information sent by device $D_i$ and by $P_i$ the transmit power of device $D_i$. Let ${\bf{n}}\sim {\mathbb C}^{N\times1}$ denote the additive white Gaussian noise with the covariance matrix ${\mathbb E}\left[{\bf nn}^H \right] = \sigma^2 {\bf I}_N$, where ${\mathbb E}[\cdot]$ is the expectation operation and $(\cdot)^H$ is the transpose conjugate. We then can write the received signal at the AP as
\begin{equation}\label{eq:received_signal}
{\bf{y}} = \sum\limits_{i = 1}^K {\sqrt {{P_i}} } {{\bf{h}}_i}{x_i} + {\bf{n}}  = {\bf{Hx}} + {\bf{n}},
\end{equation}
where ${\bf{H}} = \left[ {\sqrt {{P_1}} {{\bf{h}}_1}, \ldots ,\sqrt {{P_K}} {{\bf{h}}_K}} \right]$, ${\bf{x}} = \left[ {{x_1}, \ldots ,{x_K}} \right]^T$ with $[\cdot]^T$ denoting the transpose.

According to the principle of linear zero-forcing receiver \cite{Chen2007jsac}, the information transmitted by multiple nodes will be recovered via
\begin{equation}\label{}
{\bf{\hat x}} = {{\bf{H}}^\dag }{\bf{y}} = {\bf{x}} + {{\bf{H}}^\dag }{\bf{n}},
\end{equation}
where ${{\bf{H}}^\dag } = {\left( {{{\bf{H}}^H}{\bf{H}}} \right)^{ - 1}}{{\bf{H}}^H}$ is the pseudo-inverse of $\bf{H}$. For simplicity, in this paper we concentrate on the symmetric network topology, i.e., $P_1\Omega _{1} = \cdots = P_K\Omega _{K} $. In this case, the error probability for each data stream will only depend on the number of devices scheduled to transmit at the same time. With reference to \cite{Chen2007jsac}, when the AP schedules $k$ out of $K$ devices to transmit simultaneously in the uplink, the achievable error (outage) probability for each scheduled device can be expressed as
\begin{equation}\label{eq:error_prob}
{P_e}\left( k \right) = 1 - \sum\limits_{i = 0}^{N - k} {\frac{{{{\left( {\frac{{{\sigma ^2}}}{{P\Omega }}{\gamma _{\rm {th}}}} \right)}^i}}}{{i!}}\exp \left( { - \frac{{{\sigma ^2}}}{{P\Omega }}{\gamma _{{\rm {th}}}}} \right)},
\end{equation}
where $\gamma _{{\rm {th}}}$ is the required SNR threshold for correct decoding, which is assumed to be identical among all devices. We can see from \eqref{eq:error_prob} that the more the devices to be scheduled in each batch, the higher the error probability for each data stream. The rationale is that more degrees-of-freedom associated with the multiple antennas at the AP is used for cancelling the mutual interference among more data streams.

In the following two subsections, we first formally define the expected AoI of the system, and then mathematically formulate the multiuser scheduling problem.
\subsection{{System Expected AoI}}\label{AoIdefinition}
Denote by $t=1,2,3,\cdots$ the index of time slots and denote by $\delta_i\left(t\right)$, $i\in\left\{1,\cdots,N\right\}$, the instantaneous AoI of the $i$-th device at the beginning time slot $t$. We define the age vector ${\boldsymbol{\delta }}\left( t \right) = \left\{ {{\delta _1}\left( t \right), \ldots ,{\delta _K}\left( t \right)} \right\}$ for notation simplicity. We use $I_i\left(t\right)$ to denote the indicator of whether or not the device $D_i$ is scheduled to transmit in time slot $t$. Particularly, $I_i\left(t\right)=1$ means that $D_i$ will transmit during time slot $t$, and $I_i\left(t\right)=0$ otherwise. Based on the definition of the AoI, the instantaneous AoI of $D_i$ drops to one when $D_i$ successfully delivers a status update to the AP. Otherwise, the instantaneous AoI of $D_i$ increases by one for each time slot. Mathematically, the evolution of the instantaneous AoI for the device $D_i$ can be expressed~as
\begin{equation}
\delta _i \left( {t  + 1} \right) =\left\{
\begin{matrix}
\begin{split}
   &{1}, \quad\text{if}~ I_i \left( t  \right)=1, {~\rm and~} J_i \left( t  \right) = 1, \\
   &{\delta _i \left( {t } \right)+1}, \quad\text{otherwise},  \\
\end{split}
\end{matrix}
\right..
\end{equation}
where $J_i$ is the indicator of whether the transmission of $D_i$ is correct or not when it is scheduled to transmit. $J_i = 1$ if the transmission is correct and $J_i = 0$ otherwise. {\color{black}According to \eqref{eq:error_prob}, we have
\begin{equation}
\label{eq:pr1}
\begin{aligned}
\Pr\left(\delta_i(t+1)=1|\delta_i(t)\right)&=I_i(t)\times\Pr\left(J_i(t)=1\right)\\
&=I_i(t)\left[1-P_e\left(\sum_{i=1}^{K}I_i(t)\right)\right],
\end{aligned}
\end{equation}
\begin{equation}
\label{eq:pr2}
\begin{aligned}
&\Pr\left(\delta_i(t+1)=\delta_i(t)+1|\delta_i(t)\right)=1-I_i(t)\times\Pr(J_i(t) = 1)\\
&~~~~~~~~~~~~~~~~=1-I_i(t)\left[1-P_e\left(\sum_{i=1}^{K}I_i(t)\right)\right].
\end{aligned}
\end{equation}}
{Based on the AoI evolution, the expected AoI of the system can be formally defined as}
{
\begin{equation}\label{AoIexpression}
\bar \delta _s  = \mathop {\lim }\limits_{T \to \infty } {1 \over TK}\sum\limits_{t  = 1}^T \sum\limits_{i  = 1}^K {\delta _i \left( t  \right)}.
\end{equation}}
\subsection{Problem Formulation}
We now describe the concerned age-aware multiuser scheduling problem for considered uplink MIMO system. Specifically, one fundamental question that we want to answer is \emph{for a given age vector ${\boldsymbol{\delta }}\left( t \right) $, which device or group of  devices should be scheduled to transmit in time slot $t$ to minimize the system expected AoI in the long term?} This is actually a non-trivial question since the devices will update more frequently when more of them are scheduled in each batch, which, however, will lead to a higher transmission error probability according to \eqref{eq:error_prob}. We thus aim to find the optimal scheduling policy $\pi$ to minimize the system expected AoI given in \eqref{AoIexpression}. Mathematically, we have the following problem
\begin{prob}\label{problem1}
\begin{equation}\label{}
\mathop {\min }\limits_\pi ~\bar \delta_s(\pi).
\end{equation}
\end{prob}

\section{Optimal and Suboptimal Policies}
To find the optimal multiuser scheduling policy, in this section we recast Problem 1 into an MDP problem, described by a 4-tuple $\{\mathcal{S},\mathcal{A},{P_{\bf a}},r\}$, where
\begin{itemize}
	\item State space $\mathcal{S}\in \mathbb{R}^K$: the state in time slot $t$ is composed by the instantaneous AoI of all clients, ${\bf s}_t \buildrel \Delta \over =  (\delta_{1}(t), ..., \delta_{K}(t))$.
	\item Action space $\mathcal{A}\in\mathbb{R}^{K}$: the action in time slot $t$ is {\color{black}a binary vector ${\bf a}_t=(I_{1}(t), ...,I_{K}(t))$, where $I_i(t)$ is defined in Sec. II.} 
	\item Transition probability ${P}_{\bf a}\left({\bf s},{\bf s}^\prime\right) = \Pr\left({\bf s}_{t+1} = {\bf s}^\prime |{\bf s}_t = {\bf s},{\bf a}_t={\bf a }\right)$ is the transition probability from state ${\bf s}$ to state ${\bf s}^\prime$ when taking action ${\bf a}$ in the time slot $t$. {\color{black}The detailed transition probability is given in \eqref{eq:pr1} and \eqref{eq:pr2}.}
	\item  $r\left({\bf s},{\bf a}\right): \mathcal{S} \times \mathcal{A}  \rightarrow \mathbb{R}$ is the one-stage reward received in time slot $t$, defined as $r\left({\bf s},{\bf a}\right)=\frac{1}{K}\sum_{i=1}^{K}{\bf s}(i)$, which is considered to be independent of the action $\bf a$ in our design. Note that the action $\bf a$ will determine the subsequent state ${\bf s}^\prime$ as well as the corresponding one-stage reward.
\end{itemize}

Given any initial state ${\bf s}_0$, the infinite-horizon average reward of any feasible policy $\pi \in \Pi$ can be expressed as
\begin{equation}
\label{e4}
C(\pi,{\bf s}_0)=\lim_{T \rightarrow \infty}\sup\frac{1}{T} \sum_{t=0}^{T}{\mathbb E}[r_{{\bf{s}}_0}^\pi\left({\bf s}_t,{\bf a}_t\right)].
\end{equation}
Based on the above formulations, we can affirm that the Problem \ref{problem1} can be transformed to the following MDP problem
\begin{prob}
	\label{p2}
	\begin{equation}
	\label{pro2}
	\min_{\pi} C(\pi,{\bf s}_0).
	\end{equation}
\end{prob}
To proceed, we first investigate the existence of optimal stationary and deterministic policy of Problem \ref{p2}. In this regards, we have the following theorem.
\begin{theorem}
	\label{TE1}
	There exist a constant $J^{*}$, a bounded function $h(\mathbf{s}):\mathcal{S} \rightarrow \mathbb{R}$ and a stationary and deterministic policy $\pi^{*}$, satisfies the average reward optimality equation,
	\begin{equation}
	\label{e10}
	J^{*}+h(\mathbf{s})=\min_{{\bf a}\in \mathcal{A}} \left\{r(\mathbf{s},{\bf a}) \mathbb  + {\mathbb{E}}[h({\mathbf{s}}^\prime)]\right\},
	\end{equation} $\forall \mathbf{s} \in \mathcal{S}$, where $\pi^{*}$ is the optimal policy, $J^{*}$ is the optimal average reward, and ${\mathbf{s}}^\prime$ is the next state after $\mathbf{s}$ by executing the action $\bf a$.
\end{theorem}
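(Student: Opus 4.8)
The plan is to establish the average-reward optimality equation~\eqref{e10} via the classical \emph{vanishing-discount} argument. For each $\alpha\in(0,1)$ consider the $\alpha$-discounted counterpart of Problem~\ref{p2}, whose optimal cost-to-go $V_\alpha(\mathbf{s})=\min_{\pi}\mathbb{E}_{\mathbf{s}}^{\pi}\!\left[\sum_{t=0}^{\infty}\alpha^{t}r(\mathbf{s}_t,\mathbf{a}_t)\right]$ satisfies the discounted Bellman equation
\begin{equation}
\label{e_disc}
V_\alpha(\mathbf{s})=\min_{\mathbf{a}\in\mathcal{A}}\left\{r(\mathbf{s},\mathbf{a})+\alpha\,\mathbb{E}[V_\alpha(\mathbf{s}')]\right\},\qquad \mathbf{s}\in\mathcal{S}.
\end{equation}
First I would check that $V_\alpha(\mathbf{s})<\infty$ for every state and every $\alpha<1$: under the stationary ``schedule-all'' rule $\mathbf{a}_t=(1,\dots,1)$ in every slot (admissible since $N\ge K$), each $\delta_i(t)$ is driven by i.i.d.\ Bernoulli successes of probability $1-P_e(K)>0$, so $\mathbb{E}[\delta_i(t)]$ is bounded uniformly in $t$ and hence $V_\alpha(\mathbf{s})=O\!\left(1/(1-\alpha)\right)$.

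Next I would fix the reference state $\mathbf{s}_{\mathrm{ref}}=(1,\dots,1)$, which is the componentwise minimum of $\mathcal{S}$, and define the relative value function $h_\alpha(\mathbf{s})=V_\alpha(\mathbf{s})-V_\alpha(\mathbf{s}_{\mathrm{ref}})$. Since $r(\mathbf{s},\mathbf{a})$ is nondecreasing in each AoI component and the age dynamics are monotone, $V_\alpha$ is nondecreasing in $\mathbf{s}$, which gives the lower bound $h_\alpha(\mathbf{s})\ge 0$. For a matching upper bound I would use a hitting-time coupling: starting from $\mathbf{s}$, run the schedule-all rule until the first slot $\tau$ in which all $K$ transmissions succeed simultaneously --- an event of strictly positive per-slot probability, independent across slots (equal to $[1-P_e(K)]^{K}$ when the per-stream decoding errors are independent) --- so that $\tau$ is stochastically dominated by a geometric random variable and the state at the start of slot $\tau+1$ is exactly $\mathbf{s}_{\mathrm{ref}}$; then switch to the $\alpha$-optimal policy. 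Since $\alpha^{t}\le 1$ and $V_\alpha\ge 0$, this yields $V_\alpha(\mathbf{s})\le B(\mathbf{s})+V_\alpha(\mathbf{s}_{\mathrm{ref}})$ with $B(\mathbf{s}):=\mathbb{E}\!\left[\sum_{t=0}^{\tau}r(\mathbf{s}_t,\mathbf{a}_t)\right]$; because AoI grows at most linearly and $\tau$ has finite moments, $B(\mathbf{s})<\infty$ and, crucially, does not depend on $\alpha$. Hence $0\le h_\alpha(\mathbf{s})\le B(\mathbf{s})$ for all $\alpha$ and all $\mathbf{s}$.

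With these uniform bounds in hand, the rest is a standard Sennott/Bertsekas-type limiting argument. The family $\{(1-\alpha)V_\alpha(\mathbf{s}_{\mathrm{ref}})\}_\alpha$ lies in a bounded subset of $[0,\infty)$ and $\{h_\alpha\}_\alpha$ is pointwise bounded, so a diagonal extraction yields a sequence $\alpha_n\uparrow 1$ along which $(1-\alpha_n)V_{\alpha_n}(\mathbf{s}_{\mathrm{ref}})\to J^{*}$ for a constant $J^{*}$ and $h_{\alpha_n}(\mathbf{s})\to h(\mathbf{s})$ for every $\mathbf{s}$, with $0\le h(\mathbf{s})\le B(\mathbf{s})$. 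Rewriting~\eqref{e_disc} as $(1-\alpha)V_\alpha(\mathbf{s}_{\mathrm{ref}})+h_\alpha(\mathbf{s})=\min_{\mathbf{a}\in\mathcal{A}}\{r(\mathbf{s},\mathbf{a})+\alpha\,\mathbb{E}[h_\alpha(\mathbf{s}')]\}$ and letting $\alpha=\alpha_n\to 1$, the minimum over the finite action set $\mathcal{A}$ and the expectation over the finitely many reachable successor states (cf.~\eqref{eq:pr1}--\eqref{eq:pr2}) both pass to the limit, producing~\eqref{e10}. Any action attaining the right-hand minimum defines a stationary and deterministic policy $\pi^{*}$, and a routine verification argument --- using $h(\mathbf{s})\le B(\mathbf{s})$ to make the telescoping/martingale terms vanish in the Ces\`aro average --- shows that $J^{*}$ is the optimal average reward and that it is attained by $\pi^{*}$.

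The step I expect to be the main obstacle is the uniform-in-$\alpha$ upper bound $h_\alpha(\mathbf{s})\le B(\mathbf{s})$, since it rests on exhibiting a regenerative stationary policy that reaches $\mathbf{s}_{\mathrm{ref}}$ from every state within finite expected time and finite expected accumulated cost; the remaining steps are a mechanical instantiation of the vanishing-discount method, with the finiteness of $\mathcal{A}$ and of the per-state set of reachable successors making the limit interchange harmless. As an alternative, one could combine the later action-elimination step --- which discards all sufficiently large AoI values --- with the classical theory of finite weakly-communicating MDPs to obtain~\eqref{e10} directly with a genuinely bounded $h$; that route, however, still needs a uniform bound to justify the truncation, so the argument above is the more self-contained one.
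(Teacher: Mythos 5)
Your proposal is correct in outline, but it reaches the optimality equation by a genuinely different route than the paper. The paper proves Theorem~\ref{TE1} by verifying the abstract conditions of \cite{guo2006average}: it introduces the weight function $\omega(\mathbf{s})=\frac{1}{K}\sum_{i}\delta_i$, checks a one-step drift inequality $\sum_{\mathbf{s}'}\omega(\mathbf{s}')P(\mathbf{s}'|\mathbf{s},\mathbf{a})\leq\beta\omega(\mathbf{s})+m$, and then bounds $h_\alpha$ in the associated weighted Banach space via \cite[Proposition 6.10.1]{puterman2014markov}. You instead run the vanishing-discount argument directly in Sennott's form: finiteness of $V_\alpha$, the lower bound $h_\alpha\geq 0$ from monotonicity of $V_\alpha$, an $\alpha$-independent upper bound $h_\alpha(\mathbf{s})\leq B(\mathbf{s})$ from a regeneration (hitting-time) argument under the schedule-all policy, and a diagonal extraction, with the finiteness of the action set and of the per-state successor set making the limit interchange harmless. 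Both are instances of the vanishing-discount method, but your sandwich bounds are built by hand rather than imported from a cited theorem, which makes the argument more self-contained; it also sidesteps a delicate point in the paper's drift verification, namely that for the ``schedule nobody'' action $\omega$ increases deterministically by one, so the candidate $\beta$ the paper exhibits tends to $1$ as $\sum_i\delta_i\to\infty$ and no uniform $\beta<1$ is actually obtained for that action. The one caveat in your argument is the positive probability of \emph{simultaneous} success of all $K$ devices, which is what makes $\mathbf{s}_{\mathrm{ref}}=(1,\dots,1)$ reachable in a single regeneration step: the paper only specifies the marginal success probabilities in \eqref{eq:pr1}--\eqref{eq:pr2}, so you should either justify joint success from the underlying Rayleigh/zero-forcing channel model (where it does hold with positive probability) or regenerate at some other state that is reachable using only the marginals; with that fixed, the rest of your plan goes through.
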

\begin{proof}
	See Appendix \ref{A1}.
\end{proof}	

{Theorem~\ref{TE1} shows that the optimal policy of the formulated MDP is stationary (i.e., does not vary in time) and deterministic (i.e., is not randomized in action selection). According to \cite[Chapter 8]{sennott2009stochastic}, relative value iteration (RVI) can be used for calculating the optimal policy with a finite state space to approximate that of the countable but infinite state space.}

\subsection{Action Elimination}
The action space for the formulated MDP can be huge, especially when the number of devices in the considered system is large. This is because for a given number of devices $K$, the AP can have $2^K$ possible different actions to take for each state. In addition, the computational complexity of the RVI algorithm is directly related to the sizes of state space and action space. As the state space is composed of the instantaneous AoI of all clients, it cannot be further reduced. In this subsection, we conduct action elimination to reduce the action space for a lower computation complexity, by carefully analyzing the structure of the MDP.

Recall the definition of the one-stage reward function in our MDP formulation, which is the average value of the instantaneous ages of all devices. Hence, to minimize the one-stage reward of the next time slot, the AP should always schedule the devices with larger instantaneous ages. Specifically, if the AP decides to schedule $k$ nodes, it should ask the devices with the maximum $k$ ages to transmit in the current time slot. By doing this, we can reduce the number of possible actions from $2^K$ to $K+1$, which will significantly lower the computation complexity for finding the optimal policy.

{
\subsection{A Sub-optimal Policy}
In this subsection, we propose a low-complexity suboptimal policy, which is inspired by the max-weight policy developed in \cite{kadota2018optimizing,kadota2018scheduling,kadota2019minimizing}. Specifically, the AP chooses the action that minimizes the expected reward of the next state, termed one-step expected next step reward. Given state  ${\bf s}=(\delta_1,\delta_2,...,\delta_{K})$, we sort it in descending order and obtain a new state vector ${\bf s}''=(\delta_1'',\delta_2'',...,\delta_{K}'')$. According to the action elimination section, the number of possible  actions is reduced to $K+1$, denoting how many nodes should be selected for transmission. If the action is to select $k$ nodes for transmission, then the first $k$ nodes in ${\bf s}''$ will be selected and one-step expected next step reward $\mathbb{E}[r({\bf s'}|{\bf s},{\bf a}(k))]$ is
\begin{equation}
\mathbb{E}[r({\bf s'}|{\bf s},{\bf a}(k))]=\frac{1}{K}\left(\sum_{i=1}^{K}\delta_i''+K-(1-P_e(k))\sum_{i=1}^{k}\delta_i''\right).
\end{equation} Specifically, the term $(1-P_e(k))\sum_{i=1}^{k}\delta_i''-k$ is the expected AoI reduction for the $k$ nodes selected for transmission. For the rest $K-k$ nodes, their AoI will all increase by $1$. As such, we have the above one-step expected next step reward $\mathbb{E}[r({\bf s'}|{\bf s},{\bf a}(k))]$. The optimal action for the state ${\bf s}$ under the suboptimal policy $\pi'$ is
\begin{equation}\label{eq:suboptimal}
	\pi'({\bf s})=\arg_{{\bf a}(k), \,k\in\{0,1,...,K\}} \min{\mathbb{E}[r({\bf s'}|{\bf s},{\bf a}(k)])}.
\end{equation}
Compared with the MDP-based optimal policy, the suboptimal policy given in \eqref{eq:suboptimal} is simple to calculate and easy to implement. In addition,  as shown in the numerical results presented in Section IV, the suboptimal policy can achieve a near-optimal performance.
}

\section{Numerical Results and Discussions}
In this section, numerical simulations are provided to compare the performance of the proposed optimal and sub-optimal policies with that of conventional scheduling policies in the considered MIMO status update system over different setups. We set the path loss factor $\tau=2$ and the successful decoding threshold $\gamma _{\rm {th}}=1$ in all simulations. Due to the curse of dimension issue of the RVI algorithm, we consider the scenario with three devices, i.e., $K=3$. In calculating the MDP-based optimal policy, we apply a state truncation ($\delta_i(t) \leq 50$, $\forall i,\;t$) to approximate the countable state space according to \cite{sennott2009stochastic}.
\begin{figure}[!t]
	\centering \scalebox{0.5}{\includegraphics{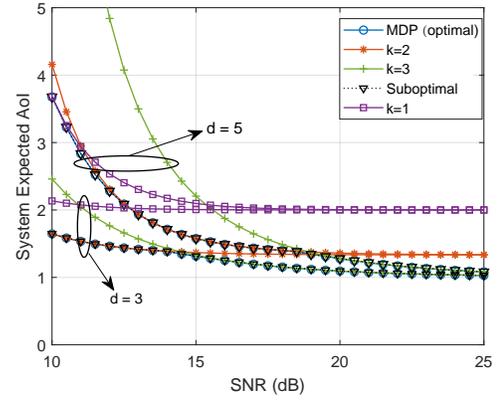}}
	\caption{The performance comparison of different policies versus SNR, when the total number of antennas are $3$, i.e., $N=3$.}
	\label{fig1}
\end{figure}

We plot the system expected AoI curves of various policies versus the transmit SNR (i.e., $P/\sigma^2$) in Fig. \ref{fig1}, in which the number of antennas equipped at the AP is set to $3$. Two groups of curves are plotted corresponding to the cases with $d =3$ and $d=5$, respectively. In each group, the performances of the MDP-based optimal policy, the proposed suboptimal policy given in \eqref{eq:suboptimal}, as well as the stationary policies of scheduling a fixed number of devices ($k=1,\,2,\,3$) are illustrated and compared. We can first see from the figure that for each policy, the system expected AoI is lower when $d$ is smaller. This is understandable since a smaller $d$ indicates a lower error probability for the status update transmissions. However, the performance gap vanishes as the transmit SNR increases. The rationale is that when the SNR is high enough (i.e., the transmission error probability becomes very small), it is the scheduling policy that determines the system performance. We can also observe from Fig. \ref{fig1} that the proposed suboptimal policy can approach the MDP-based optimal policy in all simulated cases. Furthermore, both of them significantly outperform the stationary policies, which always schedule a fixed number of devices.

\begin{figure}[!t]
	\centering \scalebox{0.5}{\includegraphics{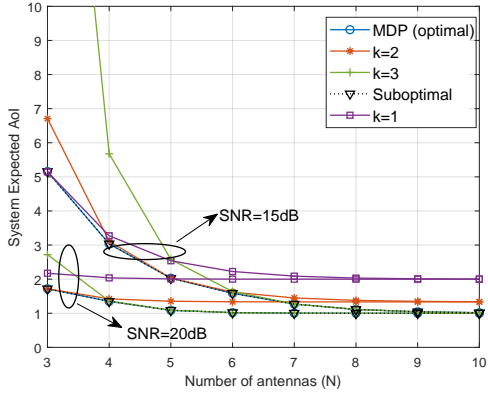}}
	\caption{The performance comparison of different policies versus the number of antennas $N$.}
	\label{fig2}
\end{figure}
We show the performance of all policies considered in Fig. \ref{fig1} versus the number of antennas at the AP (i.e., $N$) for two cases with different transmit SNRs in Fig. \ref{fig2}. Similar phenomenons as in Fig. \ref{fig1} can be observed from Fig. \ref{fig2}. More specifically, the performance of the suboptimal policy in \eqref{eq:suboptimal} almost coincide with that of the optimal policy attained by resolving the formulated MDP. The optimal and sub-optimal adaptive policies developed in this paper are always superior to the stationary policies that schedule a constant number of nodes to transmit in each time slot. The performance of all schemes simulated in Fig. \ref{fig2} tends to be saturated when the value of $N$ is large enough. This is because the transmission error probability decreases as $N$ increases. When $N$ is sufficiently large, scheduling the maximum number of devices (i.e., $k=K$) will lead to the best system performance.



%
\section{Conclusions}
In this paper, we have resolved the problem of minimizing the age of information in multiuser MIMO status update systems. Specifically, we investigated the multiuser scheduling issue and formulated it into a Markov decision process (MDP) problem. We proved that the formulated MDP always admits a stationary and deterministic policy. To reduce the computation complexity of the optimal policy, we executed the action elimination by leveraging a key characteristics of the MDP problem. We further developed a sub-optimal policy that only considers the optimization of the next-step expected reward. Simulation results were provided to demonstrate the superiority of the proposed policies over the benchmarking policies that consistently schedule a fixed number of devices to transmit in each time slot.

\begin{appendices}

	\section{Proof of Theorem \ref{TE1}}
	\label{A1}
	We prove this theorem by verifying the Assumptions 3.1, 3.2 and 3.3 in \cite{guo2006average} hold. As the action space for each state is finite, assumption 3.2 holds, and we only need to verify the following two conditions.
	\begin{itemize}
		\item[\text{1)}] There exist positive constants $\beta < 1$, $G$ and $m$, and a measurable function $\omega({\bf s}) \geq1$ on $S$, ${\bf s}=(\delta_1,\delta_2,...,\delta_{K})$ such that the reward function of MDP problem $r({\bf s},{\bf a})=\frac{1}{K}\sum_{i=1}^{K}\delta_{i}$, $|r({\bf s},{\bf a})|\leq G\omega({\bf s})$ for all state-action pairs $({\bf s},{\bf a})$ and
		\begin{small}
			\begin{equation}
			\sum_{{\bf s}'\in S} \omega({\bf s}')P({\bf s}'|{\bf s},{\bf a})\leq\beta\omega({\bf s})+m,\ {\rm \ for \ all}\ ({\bf s},{\bf a}).
			\end{equation}
		\end{small}
		\item [\text{2)}] There exist two value functions $v_1,v_2 \in B_{\omega}(S)$, and some state ${\bf s}_0\in S$, such that
		\begin{small}
			\begin{equation}
			v_1({\bf s})\leq h_{\alpha}({\bf s})\leq v_2({\bf s}), \ {\rm for \ all} \ {\bf s}\in S,\ {\rm and} \ \alpha \in(0,1),
			\vspace{-.5em}
			\end{equation}
		\end{small}where {\small$h_{\alpha}({\bf s})=V_{\alpha}({\bf s})-V_{\alpha}({\bf s}_0)$} and {\small$B_{\omega}(S):=\{u:\Vert u\Vert_{\omega} <\infty \}$} denotes Banach space, {\small$\Vert u\Vert_{\omega}:=\sup_{{\bf s}\in S}\omega({\bf s})^{-1}|u({\bf s})|$} denotes the weighted supremum norm.
	\end{itemize}
To prove condition 1, we first sort the elements of state ${\bf s}=(\delta_1,\delta_2,...,\delta_{K})$  in descending order and obtain a new age vector ${\bf s}''=(\delta_1'',\delta_2'',...,\delta_{K}'')$, where $\delta_1''$ is the largest element in ${\bf s}$, $\delta_2''$ is the second large element, etc. When the action ${\bf a}$ is to schedule $k$ devices to transmit simultaneously, {the first $k$ devices in ${\bf s}''$ should be scheduled.} In this context, defining $\omega({\bf s})=\frac{1}{K}\sum_{i=1}^{K}\delta_i$, we have $\sum_{{\bf s}'\in S}\omega({\bf s}')P({\bf s}'|{\bf s},{\bf a})=\frac{1}{K}\left(K+\sum_{i=1}^{K}\delta_i''-(1-P_e(k))\sum_{i=1}^{k}\delta_i''\right)$ and $G\geq 1$. Thus, when $\omega({\bf s})=\frac{1}{K}\sum_{i=1}^{K}\delta_i$ and $m >1$, there exist $\max_{k}\{\frac{K+\sum_{i=1}^{K}\delta_i''-(1-P_e(k))\sum_{i=1}^{k}\delta_i''-mK}{\sum_{i=1}^{K}\delta_i}\}\leq\beta<1$ to meet condition 1.

To prove that condition 2 holds in our problem, we show that when $\omega({\bf s})=\frac{1}{K}\sum_{i=1}^{K}\delta_i$, there exists $\frac{\sum_{i=1}^{K}\delta_i+K}{\sum_{i=1}^{K}\delta_i}\leq \kappa<\infty$ that $\sum_{{\bf s}'\in S} \omega({\bf s}')P({\bf s}'|{\bf s},{\bf a})\leq\kappa\omega({\bf s}) $ for all $({\bf s},{\bf a})$, and $\forall {\bf a} \in {\small D^{MD}}$ (Markovian and deterministic (MD) decision rule set), ${\small \sum_{{\bf s}'\in S} \omega({\bf s}')P({\bf s}'|{\bf s},{\bf a})\leq \omega({\bf s})+1\leq (1+1)\omega({\bf s})}$, so that  $\alpha^M\sum_{{\bf s}'\in S} \omega({\bf s}')P_{\pi}^M({\bf s}'|{\bf s},{\bf a})\leq \alpha^M (\omega({\bf s})+M)<\alpha^M(1+M)\omega({\bf s})$, $\pi=({\bf a}_1,...,{\bf a}_M)$. Hence, for each $\alpha$, $0\leq \alpha <1$, there exists a $\eta$, $0\leq \eta<1$ and an integer $M$ such that
	\begin{equation}
	\alpha^M\sum_{{\bf s}'\in S} \omega({\bf s}')P_{\pi}^M({\bf s}'|{\bf s},{\bf a})\leq \eta \omega({\bf s})
	\end{equation} for $\pi=({\bf a}_1,...,{\bf a}_M)$, where ${\bf a}_m\in D^{MD}$, $1\leq m\leq M$. Then, according to \cite[Proposition 6.10.1]{puterman2014markov}, for each $\pi\in \Pi^{MD}$ (MD policy) and ${\bf s}\in S$
	\begin{small}
		\begin{equation}
		|V_{\alpha}({\bf s})|\leq \frac{1}{1-\eta}[1+\alpha\kappa+...+(\alpha\kappa)^{(M-1)}]\omega({\bf s}).
		\vspace{-.5em}
		\end{equation}
	\end{small} We thus can further prove the condition 2. This completes the proof.
\end{appendices}


\ifCLASSOPTIONcaptionsoff
  \newpage
\fi

\bibliographystyle{IEEEtran}
\bibliography{References}

\end{document}